\begin{document}

\title{Censor Resistant Instruction Independent Obfuscation for Multiple Programs}

\author{Ali Ajorian}
\affiliation{%
 \institution{University of Basel}
 \country{Switzerland}}
\email{ali.ajorian@unibas.ch}




\begin{abstract}
This work builds upon and optimizes our prior research on \textit{obfuscation as instruction decorrelation} which achieves multiple program obfuscation. Leveraging this infrastructure, we further achieve the property of sensor-resistant computation.
\end{abstract}

\begin{CCSXML}
<ccs2012>
   <concept>
       <concept_id>10011007.10010940</concept_id>
       <concept_desc>Software and its engineering~Software organization and properties</concept_desc>
       <concept_significance>500</concept_significance>
       </concept>
 </ccs2012>
\end{CCSXML}

\ccsdesc[500]{Software and its engineering~Software organization and properties}

\keywords{Software Protection, Obfuscation, Instruction Independence, Instruction Decorrelation}

\received{25 April 2024}

\maketitle

\section {Introduction}

Accessing the binaries of a computer program raises an important question of how much can we learn about the program beyond simply running it with different inputs and observing its behavior. There are numerous techniques available that allow adversaries to analyze these binaries or run the program to study its behavior and internal states, with the intention of reverse engineering the source code or extracting secrets stored within the program. To protect software from such attacks, computer science has employed various ad-hoc heuristics for nearly four decades \cite{xu2020layered} and formal models for over two decades \cite{barak2001possibility, goldwasser2007best, kuzurin2007concept, ajorianiio}, collectively referred to as \textit{program obfuscation} or shortly \textit{obfuscation}. Roughly speaking, obfuscation is a transformation that maintains a program's functionality while hindering the extraction of valuable information, such as algorithms, data structures, and secret keys, with an acceptable slowdown overhead. 

Ajorian et. al introduced \textit{obfuscation as instruction decorrelation}, which focuses on the internal structures of regular programs to hide essential instructions within a set of junk instructions \cite{ajorianiio}. They defined the unintelligibility of an obfuscator as the decorrelation of all possible pairs of essential and junk instructions, ensuring that an efficient adversary cannot identify a correlated set of instructions to reconstruct an obfuscated program. 
 
In this work, we enhance the generation of junk instructions by obfuscating a collection of programs rather than focusing on a single program. Specially, for each program the remaining programs serve as the source of junk instructions. Since the obfuscator ensures that all obfuscated instructions are decorrelated, no efficient adversary can determine the origin of each instruction to disrupt the execution of a program. This means that an \textit{evaluator} executing the output of such an obfuscator can run either all or none of the input programs, making it impossible to censor any of the input programs.

\section{Preliminaries}
\label{sec:preliminaries}
This section introduces the notation that will be used consistently across this work and offers foundational background information relevant to our study.
 
\subsection{Notation}

PPT stands for probabilistic polynomial time algorithms, which can make random choices during execution, and their running time is bounded by a polynomial function of the input size.

By $x \leftarrow D$ we mean that $x$ is a random variable drawn from the distribution $D$. The notation $A(.) \rightarrow x$ denotes that a probabilistic algorithm A generates an outcome $x$ from an input $(.)$.

The notation $|.|$, in addition to its standard mathematical meanings such as the length of a set, is also utilized for indicating program size, referring to the number of elements or instructions within the program.

If $X$ is a set, by $X^*$ we mean all possible sequences of any length defined on the set X.

A function $\epsilon : \mathbb{N} \rightarrow \mathbb{N}$ is defined as negligible function if its growth rate is slower than the inverse of any polynomial. In other words, for any positive polynomial $p(n)$, there exists a threshold $n_0$ such that for all $n>n_0$, $\epsilon(n)<\frac{1}{p(n)}$.

By $set(p)$, we mean the set of all instructions in the program $p$.

\subsection{Background and Related Works}
\label{sec:background}

Let $\mathcal{P}$ be the set of all possible programs defined on an instruction space $\mathcal{I}$, Ajorian et. al defined an \textit{instruction-independent obfuscation} \cite{ajorianiio} as a PPT algorithm $IIO:\mathcal{P}\rightarrow \mathcal{P}$  with the following properties:

\begin{itemize}
	\item[--] \textbf{Functionality preservation}: $\hat p = IIO(p)$ computes the same functionality as input program $p$.
	\item[--] \textbf{Polynomial slowdown}: There exist polynomials $q$ and $r$ such that $|IIO(p)| \leq |p|$ and if $p$ halts within a maximum $t$ steps on input $x$, the obfuscated program halts within $r(t)$ steps on the same input.
	\item[--] \textbf{Unintelligibility}: for any auxiliary input $z$ of polynomial size, for any PPT adversary $A$ there must exist a negligible function $\epsilon$ such that:

	\begin{equation}
	\label{eq:s_unintelligibility}
	\begin{split}
		\big| & \Pr\nolimits_{A(\hat p, z)\rightarrow \hat s_1, \hat s_2; \hat s_1 \neq \hat s_2}[\{\hat s_1, \hat s_2\} \subseteq \mathbb{S} \text{ or } \{\hat s_1, \hat s_2\} \subseteq \mathbb{J}] - \\ 
      		& \Pr\nolimits_{\hat s_1, \hat s_2 \leftarrow \Omega ; \hat s_1 \neq \hat s_2}[\{\hat s_1, \hat s_2\} \subseteq \mathbb{S} \text{ or } \{\hat s_1, \hat s_2\} \subseteq \mathbb{J}] \big| \leq \epsilon(\lambda)
	\end{split}
	\end{equation}
\end{itemize}
where $\lambda$ is a user choice of a security parameter, $\mathbb{S}$ and $\mathbb{J}$ are the sets of essential and junk instructions respectively and $\Omega=set(\hat p)$.

\section{IIO with Multiple Programs}
An IIO obfuscator employs a set of junk instructions, denoted as $\mathbb{J}$, with specific requirements to conceal the set of essential instructions $\mathbb{S}$ of a source program $p$. Provided these requirements are satisfied, $\mathbb{J}$ can be constructed using various methods. Specially, $\mathbb{J}$ may consist of the union of several $\mathbb{S}_i$s where each $\mathbb{S}_i$ represents the essential instructions of a program $p_i$. From this perspective, an IIO obfuscator can be viewed as a transformation $\mathcal{O}$ that takes a set of programs $P=\{p_1, \ldots, p_n\}$ as inputs and produces a single obfuscated program $\mathcal{O}(P)$ that is semantically equivalent to the set $P$ while ensuring that an efficient adversary cannot correlate any pair of instructions originating from the same source program. 

The following definition encapsulates this concept of obfuscation:

\begin{definition}[Multiple Program IIO]
\label{RIObfs}
A PPT algorithm $\mathcal{O}$ is considered as a multiple program IIO for a set of input programs $P=\{p_1, \ldots, p_n\}$ if its output satisfies the following three conditions:
\begin{itemize}
	\item[--] \textbf{Functionality preservation}: $\mathcal{O}(P)$ computes the same functionality as running all individual programs in $P$.
	\item[--] \textbf{Polynomial slowdown}: There exist polynomials $q$ and $r$ such that $|\mathcal{O}(P)| \leq q(\sum_{i=1}^n |p_i|)$ and if $p_1, \ldots, p_n$ halts within a maximum $t$ of steps on inputs $x_1, \ldots, x_n$ respectively, the obfuscated program $\mathcal{O}(P)$ halts within $r(t)$ steps on the same input.
	\item[--] \textbf{Unintelligibility}: for any auxiliary input $z$ of polynomial size, for any PPT adversary $A$ and for all $i \in \{1, ..., n\}$ there exist a negligible function $\epsilon$ such that: 
	
	\begin{equation}
	\label{eq:mp_unintelligibility}
	\begin{split}
		\big| & \Pr\nolimits_{A(\mathcal{O}(P), z)\rightarrow \hat s_1, \hat s_2; \hat s_1 \neq \hat s_2}[\{\hat s_1, \hat s_2\} \subseteq \mathbb{S}_i ] - \\ 
      		& \Pr\nolimits_{\hat s_1, \hat s_2 \leftarrow \Omega ; \hat s_1 \neq \hat s_2}[\{\hat s_1, \hat s_2\} \subseteq \mathbb{S}_i ] \big| \leq \epsilon(\lambda)
	\end{split}
	\end{equation}
	where $\lambda$ is a user choice of a security parameter and $\mathbb{S}_i$ is the set of essential programs for the program $p_i$ and $\mathbb{J}_i =\bigcup_{j=1, j\neq i}^{n} \mathbb{S}_{j}$.
\end{itemize}
\end{definition}

Note that, what we have defined as unintelligibility in Equation  \ref{eq:mp_unintelligibility} is not directly derived from Equation \ref{eq:s_unintelligibility}. Rather, Equation \ref{eq:mp_bare_unintelligibility}, presented below, is a direct rewriting of Equation  \ref{eq:s_unintelligibility} for multi-program obfuscation. Furthermore, in Theorem \ref{thm:bare_unint_equivalence}, we prove that Equations \ref{eq:mp_unintelligibility} and \ref{eq:mp_bare_unintelligibility} are equivalent.

\begin{equation}
	\label{eq:mp_bare_unintelligibility}
	\begin{split}
		&\forall i \in \{1 \ldots n\}: \\
		\big| & \Pr\nolimits_{A(\mathcal{O}(P), z)\rightarrow \hat s_1, \hat s_2; \hat s_1 \neq \hat s_2}[\{\hat s_1, \hat s_2\} \subseteq \mathbb{S}_i \text{ or } \{\hat s_1, \hat s_2\} \subseteq \mathbb{J}_i] - \\ 
      		& \Pr\nolimits_{\hat s_1, \hat s_2 \leftarrow \Omega ; \hat s_1 \neq \hat s_2}[\{\hat s_1, \hat s_2\} \subseteq \mathbb{S}_i \text{ or } \{\hat s_1, \hat s_2\} \subseteq \mathbb{J}_i] \big| \leq \epsilon(\lambda)
	\end{split}
	\end{equation}

\begin{theorem}
\label{thm:bare_unint_equivalence}:
Equation \ref{eq:mp_unintelligibility} and Equation \ref{eq:mp_bare_unintelligibility} are equivalent.
\end{theorem}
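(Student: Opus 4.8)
The plan is to prove the two inclusions of the equivalence separately, in each case reducing everything to one counting identity together with a couple of black-box reductions between adversaries. Start from the structural observation that, with $\Omega=set(\mathcal{O}(P))$, the origin sets $\mathbb{S}_1,\ldots,\mathbb{S}_n$ partition $\Omega$, so $\{\mathbb{S}_i,\mathbb{J}_i\}$ is a two-block partition of $\Omega$ and, for a distinct pair, the events $\{\hat s_1,\hat s_2\}\subseteq\mathbb{S}_i$ and $\{\hat s_1,\hat s_2\}\subseteq\mathbb{J}_i$ are mutually exclusive. Fix an index $i$, a PPT adversary $A$, and an auxiliary input $z$; write $\rho_A(X)$ for the probability that $A(\mathcal{O}(P),z)$ outputs a distinct pair contained in $X$, $\rho_\Omega(X)$ for the same probability under a uniformly random distinct pair, and let $\mu^A_k$ and $\mu^\Omega_k=|\mathbb{S}_i|/|\Omega|$ be the probabilities that $\hat s_k\in\mathbb{S}_i$ under $A$ and under the uniform pair, respectively. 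By inclusion--exclusion, $\rho_A(\mathbb{J}_i)=1-\mu^A_1-\mu^A_2+\rho_A(\mathbb{S}_i)$, and likewise for $\rho_\Omega$, so the left-hand side of Equation~\ref{eq:mp_bare_unintelligibility} at index $i$ equals
\[
\big|\,2\big(\rho_A(\mathbb{S}_i)-\rho_\Omega(\mathbb{S}_i)\big)-\big(\mu^A_1-\mu^\Omega_1\big)-\big(\mu^A_2-\mu^\Omega_2\big)\,\big|.
\]
Hence both directions reduce to showing that, under the relevant hypothesis, each marginal gap $\mu^A_k-\mu^\Omega_k$ is negligible.

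For Equation~\ref{eq:mp_unintelligibility} $\Rightarrow$ Equation~\ref{eq:mp_bare_unintelligibility} I would control the marginals with a self-pairing reduction: from $A$ build $A'$ that runs $A$ twice with independent coins, takes the $k$-th output instruction of each run, and outputs that pair (re-randomising one coordinate to a uniform instruction when the two coincide). Up to a correction controlled by the collision probability of $A$'s $k$-th coordinate, $\rho_{A'}(\mathbb{S}_i)\approx(\mu^A_k)^2$ while $\rho_\Omega(\mathbb{S}_i)\approx(\mu^\Omega_k)^2$, so Equation~\ref{eq:mp_unintelligibility} applied to $A'$ makes $|(\mu^A_k)^2-(\mu^\Omega_k)^2|$ negligible, and since $|\mu^A_k-\mu^\Omega_k|\le\sqrt{|(\mu^A_k)^2-(\mu^\Omega_k)^2|}$ the marginal gap is negligible. (One checks first that a concentrated $k$-th coordinate cannot occur when Equation~\ref{eq:mp_unintelligibility} holds, since pairing a heavy instruction with a uniform one would already distinguish against Equation~\ref{eq:mp_unintelligibility}.) Feeding this into the displayed identity bounds the Equation~\ref{eq:mp_bare_unintelligibility} gap by twice the (negligible) Equation~\ref{eq:mp_unintelligibility} gap plus a negligible term.

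For Equation~\ref{eq:mp_bare_unintelligibility} $\Rightarrow$ Equation~\ref{eq:mp_unintelligibility} I would argue by contraposition. Suppose some PPT $A$, index $i$, and auxiliary input make $|\rho_A(\mathbb{S}_i)-\rho_\Omega(\mathbb{S}_i)|$ non-negligible. If $A$ already violates Equation~\ref{eq:mp_bare_unintelligibility} at $i$, we are done; otherwise the displayed identity (with negligible left-hand side) forces some marginal gap $\mu^A_k-\mu^\Omega_k$ to be non-negligible, and I convert this into a violation of Equation~\ref{eq:mp_bare_unintelligibility}. When $|\mathbb{S}_i|$ is a non-negligible fraction away from $|\Omega|/2$, re-randomising the coordinate other than $k$ to a uniform instruction yields an adversary whose ``bare''-gap at $i$ is exactly $\frac{2|\mathbb{S}_i|-|\Omega|}{|\Omega|-1}\,(\mu^A_k-\mu^\Omega_k)$, which is then non-negligible. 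When $|\mathbb{S}_i|\approx|\Omega|/2$, I would instead use a second-moment distinguisher: the self-pairing adversary of the previous paragraph has ``bare''-gap $\approx 2(\mu^A_k-1/2)^2$, which is again non-negligible because here the two blocks are (nearly) balanced, together with a direct analysis of $A$'s most likely outputs to cover the concentrated-coordinate sub-case.

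The step I expect to be the real work is exactly this near-balanced regime of the reverse direction. There the uniform re-randomisation has vanishing signal, the self-pairing reduction can fail to form a distinct pair, and one can indeed exhibit borderline adversaries (e.g. against an obfuscator leaking the block of a single instruction, with two equal-size programs) that break Equation~\ref{eq:mp_unintelligibility} while leaving every individual ``bare'' test of Equation~\ref{eq:mp_bare_unintelligibility} essentially unchanged. Closing this gap requires combining the second-moment reductions with the constraints of Equation~\ref{eq:mp_bare_unintelligibility} at all indices simultaneously, and I would not be surprised if the clean statement ultimately needs a mild non-degeneracy hypothesis on $\mathcal{O}$ -- namely that no program's essential set $\mathbb{S}_i$ is within a negligible fraction of exactly half of $\Omega$ (which holds automatically once $n\ge 3$ and the program sizes are not too uneven) -- under which I would carry out the reverse direction in full.
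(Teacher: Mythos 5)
The paper's own proof of this theorem is an empty placeholder, so there is nothing to compare your argument against; but your analysis stands on its own and, crucially, exposes a genuine problem with the statement. Your inclusion--exclusion identity $\rho(\mathbb{J}_i)=1-\mu_1-\mu_2+\rho(\mathbb{S}_i)$ is exact, and it correctly reduces the Equation~\ref{eq:mp_bare_unintelligibility} gap at index $i$ to $\bigl|2(\rho_A(\mathbb{S}_i)-\rho_\Omega(\mathbb{S}_i))-(\mu^A_1-\mu^\Omega)-(\mu^A_2-\mu^\Omega)\bigr|$, so both directions hinge on controlling the marginal gaps $\mu^A_k-\mu^\Omega$. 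For the forward direction, though, you do not need the self-pairing adversary, its collision correction, or the square-root step: the one-coordinate re-randomisation you introduce only later is exact and suffices. If $A'_k$ outputs $(\hat s_k,\hat t)$ with $\hat t$ uniform over $\Omega\setminus\{\hat s_k\}$, then $\rho_{A'_k}(\mathbb{S}_i)=\mu^A_k\cdot\frac{|\mathbb{S}_i|-1}{|\Omega|-1}$ while $\rho_\Omega(\mathbb{S}_i)=\mu^\Omega\cdot\frac{|\mathbb{S}_i|-1}{|\Omega|-1}$, so applying Equation~\ref{eq:mp_unintelligibility} to $A'_k$ yields $|\mu^A_k-\mu^\Omega|\le\epsilon\cdot\frac{|\Omega|-1}{|\mathbb{S}_i|-1}$, which is negligible because $|\Omega|$ is polynomial (polynomial slowdown) and one may assume $|\mathbb{S}_i|\ge 2$. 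Substituting into the identity gives Equation~\ref{eq:mp_unintelligibility}$\,\Rightarrow\,$Equation~\ref{eq:mp_bare_unintelligibility} cleanly and with explicit, polynomial loss.

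Your worry about the reverse direction is well-founded, and as far as I can tell it is fatal for the theorem as literally stated. Computing the Equation~\ref{eq:mp_bare_unintelligibility} gap of $A'_k$ exactly gives $|\mu^A_k-\mu^\Omega|\cdot\frac{\bigl|2|\mathbb{S}_i|-|\Omega|\bigr|}{|\Omega|-1}$, which is informative only when $|\mathbb{S}_i|$ is bounded away from $|\Omega|/2$; the self-pairing route degrades the same way, its gap being on the order of $|\mu^A_k-\mu^\Omega|\cdot|\mu^A_k+\mu^\Omega-1|$. Your sketched separating example can be made concrete: take $n=2$, $|\mathbb{S}_1|=|\mathbb{S}_2|=m$, $|\Omega|=2m$, and an obfuscator that publicly pins one distinguished position to hold an instruction of $p_1$ while scrambling everything else uniformly. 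The adversary that outputs that pinned instruction together with a uniformly random other instruction has $\rho_A(\mathbb{S}_1)=\frac{m-1}{2m-1}$ versus $\rho_\Omega(\mathbb{S}_1)=\frac{m-1}{2(2m-1)}$, a constant gap, so Equation~\ref{eq:mp_unintelligibility} fails; yet its same-block-or-same-complement probability is $\frac{m-1}{2m-1}$, identical to the uniform baseline, and by exchangeability of all non-pinned positions this is true of every adversary, so Equation~\ref{eq:mp_bare_unintelligibility} holds. Hence the two conditions are not equivalent without a non-degeneracy hypothesis of the kind you propose (e.g.\ every $|\mathbb{S}_i|$ bounded away from $|\Omega|/2$ by a non-negligible margin, which, as you note, is automatic for $n\ge 3$ with comparably sized programs), and the paper should add such a hypothesis rather than claim an unconditional equivalence.
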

\begin{proof}
\end{proof}

\subsection{Resistance to Censorship}
Equation \ref{eq:mp_unintelligibility} implies that no PPT adversary is able to correlate any two obfuscated instructions that originate from the same source program. This means that adversaries are unable to distinguish between input programs. 

In the context of delegating the execution of an IIO-obfuscated program to a computing agent, this property implies that the agent cannot identify or recognize the underlying program. As a result, the agent is unable to selectively halt or modify the computation based on the program's identity. In other words, the agent can either compute all of the programs or none of them, ensuring censor resistance of input programs.

\subsection{Achieving Verifiable Computation}
In scenarios where the computation of a function $f$ for an input $x$is delegated to a computing agent, a key challenge arises: how to ensure that the computing agent, also known as prover, has honestly computed the result using $f$. This problem, known as the \textit{Verifiable Computation Problem} (VCP), is of significant interest in the field of secure and trustworthy computation. The general solution to VPC,  involves having the prover provide a proof alongside the computed result. This proof enables the other party, referred to as the verifier, to efficiently confirm the correctness of the result without needing to recompute $f(x)$ as depicted in Figure \ref{fig:vcp}. 

\begin{figure}[th]
    \centering
    \includegraphics[width=0.3\textwidth]{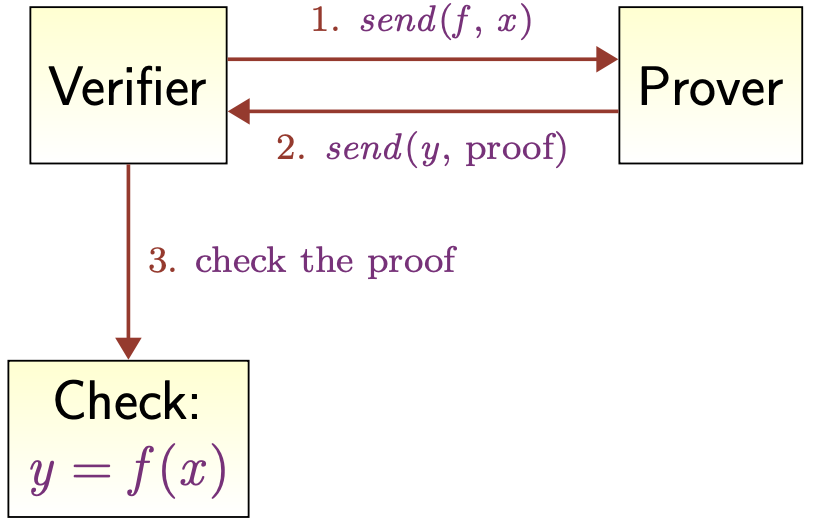}
    \caption{Verifiable Computing Problem}
    \label{fig:vcp}
\end{figure}

We can achieve verifiable computing using multiple program obfuscation. To do this, we need a light weight unknown program $c$ which serves as a proof in this framework. In our obfuscator, we consider the program set $P=\{f, c\}$. Since any IIO obfuscated program is censor-resistant, the prover cannot distinguish between $f$ and $c$ and $c$ remains unknown to the prover. During the process, the verifier provides inputs $x$ and $a$ as inputs for $f$ and $c$ respectively. The prover executes the obfuscated program and returns both $f(x)$ and $c(a)$ to the verifier. For verification, the verifier independently computes $c(a)$ and compares it with the value of $c(a)$ returned by the prover. If the values match, the verifier accepts the result $f(x)$. 

\section{Tampering Attack}
\label{sec:TampAttack}
Consider an obfuscated program $\mathcal{O}(P)$ that takes input vector $X=(x_1, \ldots, x_n)$ and produces output values $Y=(y_1, \ldots, y_n)$ for the programs $p_1, \ldots, p_n$ respectively. Throughout this process, the obfuscated program encounters a sequence of internal states $S=(s_1, \ldots, s_m)$. An adversary with the capability of tampering with the instructions or the internal states of $\mathcal{O}(P)$ can monitors $X,Y$ and $S$ before and after the tampering. The differences in these observations could potentially provide the adversary some information to correlate instructions and violate security guarantee of Equation \ref{eq:mp_unintelligibility}. Specially, the adversary could remove an instruction from the obfuscated program, execute the resulting program, and determine which output value $y_i$ changes. By doing this for polynomially many times, the adversary can detect correlations between the instructions of the obfuscated program.

We can achieve verifiable computation using IIO. A light weight function $c$ 



\subsection{Tamper-Resistance Output}
\label{label:Choutput}

To defend tampering attack, the obfuscator must guarantee that any attempt to tamper with $\mathcal{O}(P)$ either leaves the output vector $Y$ unchanged or or alters all its elements entirely. This can be accomplished through two approaches: (1) implementing a compensation mechanism that neutralizes the effects of tampered instructions, ensuring $Y$ remains unchanged, or (2) employing an encoding scheme for $Y$ that depends on all instructions of the obfuscated program, ensuring that any tampering affects the entire output vector uniformly. 

\section {Implementation}
\label{sec:construction}
In this section, we present an implementation of an obfuscator that hides which source program an obfuscated instruction originated from (Theorem~\ref{th:singleinst}) and  data access dependencies between instructions (Theorem~\ref{th:data-access-decorrelation}).

\subsection{Architecture Overview}
\label{sec:implementation:overview}


\subsection{Adversarial Model}

\section{Security Analysis}
\label{sec:sec-analysis}

\section{Implementation Results}
\label{sec:evaluation}

\section{Conclusions and Open Problems}
\label{sec:conclusion}

\begin{acks}
We would like to express our sincere gratitude to Osman Bicer for numerous insightful discussions concerning the security properties of this work.
\end{acks}

\bibliographystyle{ACM-Reference-Format}
\bibliography{main}

\appendix


\end{document}